\newcommand{\eps}{\ensuremath{\epsilon}}
\renewcommand{\setminus}{-}
\begin{document}

\bibliographystyle{plain}
\renewcommand{\thefootnote}{\fnsymbol{footnote}}

\mainmatter

\title{Polylogarithmic Supports are required for Approximate Well-Supported Nash Equilibria below 2/3}

\author{
Yogesh Anbalagan\inst{1}
\and Sergey Norin\inst{2}
\and Rahul Savani\inst{3}
\and Adrian Vetta\inst{4}
}
\institute{
School of Computer Science, McGill University.  {\tt
yogesh.anbalagan@mail.mcgill.ca} \and
Department of Mathematics and Statistics, McGill University.  {\tt
snorin@math.mcgill.ca} \and
Department of Computer Science, University of Liverpool.  {\tt
rahul.savani@liverpool.ac.uk} \and
Department of Mathematics and Statistics, and School of Computer Science, McGill University.\\
{\tt vetta@math.mcgill.ca}
}
\maketitle

\renewcommand{\thefootnote}{\arabic{footnote}}

\begin{abstract}
In an $\epsilon$-approximate Nash equilibrium, a player can gain at most
$\epsilon$ {\em in expectation} by unilateral deviation. An $\epsilon$-{\em
well-supported} approximate Nash equilibrium has the stronger requirement that
every pure strategy used with positive probability must have payoff within
$\epsilon$ of the best response payoff. Daskalakis, Mehta and
Papadimitriou~\cite{DMP09} conjectured that every win-lose bimatrix game has a
$\frac{2}{3}$-well-supported Nash equilibrium that uses supports of cardinality
at most three. Indeed, they showed that such an equilibrium will exist {\em
subject to} the correctness of a graph-theoretic conjecture. Regardless of the
correctness of this conjecture, we show that the barrier of a $\frac23$ payoff
guarantee cannot be broken with constant size supports; we construct win-lose
games that require supports of cardinality at least $\Omega(\sqrt[3]{\log n})$
in any $\epsilon$-well supported equilibrium with $\epsilon < \frac23$. The key
tool in showing the validity of the construction is a proof of a bipartite
digraph variant of the well-known Caccetta-H\"aggkvist conjecture~\cite{CH78}.
A probabilistic argument~\cite{KS10} shows that there exist $\epsilon$-well-supported
equilibria with supports of cardinality $O(\frac{1}{\epsilon^2}\cdot \log n)$,
for any $\epsilon> 0$; thus, the polylogarithmic cardinality bound presented cannot
be greatly improved.
We also show that for any $\delta>0$, there exist
win-lose games for which no pair of strategies with support sizes at most two
is a $(1-\delta)$-well-supported Nash equilibrium. In contrast, every bimatrix
game with payoffs in $[0,1]$ has a $\frac{1}{2}$-approximate Nash equilibrium
where the supports of the players have cardinality at most two~\cite{DMP09}.
\end{abstract}

\section{Introduction}\label{sec:intro}


\renewcommand{\thefootnote}{\arabic{footnote}}

\renewcommand{\theenumi}{(\alph{enumi})}
\renewcommand{\labelenumi}{(\alph{enumi})}


A Nash equilibrium of a bimatrix game is a pair of strategies in which the supports of both
players consist only of best responses.
The apparent hardness of computing an exact Nash equilibrium~\cite{DGP09,CDT09}
even in a bimatrix game has led to work on computing approximate Nash
equilibria, and two notions of approximate Nash equilibria have been developed.
The first and more widely studied notion
is of an \emph{$\epsilon$-approximate Nash equilibrium} ($\epsilon$-Nash).
Here, no restriction is placed upon the supports; any strategy can be in the supports
provided each player achieves an expected payoff that is within~$\epsilon$
of a best response. Therefore, $\epsilon$-Nash equilibria have a practical drawback: a player
might place probability on a strategy that is arbitrarily far from being a best response.
The second notion, defined to rectify this problem,  is called an
\emph{$\epsilon$-well supported approximate Nash equilibrium} ($\epsilon$-WSNE).
Here, the content of the supports are restricted, but less stringently than in
an exact Nash equilibrium.
Specifically, both players can only place positive probability on strategies that have payoff
within $\epsilon$ of a pure best response.
Observe that the latter notion is a stronger equilibrium concept: every $\epsilon$-WSNE is an $\epsilon$-Nash,
but the converse is not true.

Approximate well-supported equilibria recently played an important role in understanding the
hardness of computing Nash equilibria. They are more useful in these contexts
than $\eps$-Nash equilibria because their definition is more combinatorial and more closely resembles
the {\em best response condition} that characterizes exact Nash equilibria.
Indeed, approximate well-supported equilibria were introduced in~\cite{GP06,DGP09} in
the context of PPAD reductions that show the hardness of computing (approximate) Nash equilibria.
They were subsequently used as the notion of approximate equilibrium by Chen et
al.~\cite{CDT09} that showed the PPAD-hardness of computing an exact Nash
equilibrium even for bimatrix games.

Another active area of research is to investigate the best (smallest) $\epsilon$ that can be
guaranteed in polynomial time.
For $\epsilon$-Nash, the current best algorithm, due to Tsaknakis and Spirakis~\cite{TS08}, achieves a
$0.3393$-Nash equilibrium; see \cite{DMP09,DMP07,BBM10} for other algorithms.
For the important class of win-lose games -- games with payoffs in $\{0, 1\}$ -- \cite{TS08} gives
a $\frac{1}{4}$-Nash equilibrium.
For \eps-WSNE, the current best result was given by Fearnley et al.~\cite{FGS12} and finds a
$(\frac{2}{3}-\zeta)$-WSNE, where $\zeta = 0.00473$.
It builds on an approach of Kontogiannis and Spirakis~\cite{KS10},
which finds a $\frac{2}{3}$-WSNE in polynomial time using linear programming.
The algorithm of Kontogiannis and Spirakis produces a $\frac{1}{2}$-WSNE of
win-lose games in polynomial time, which is best-known
(the
modifications of Fearnley et al.\  do not lead to an improved approximation
guarantee for win-lose games).

It is known that this line of work cannot extend to a fully-polynomial-time
approximation scheme (FPTAS).
More precisely, there does not exist an FPTAS for computing approximate Nash equilibria unless
PPAD is in P~\cite{CDT09}.
Recall, an FPTAS requires a running time that is polynomial in both the size of
the game input and in $\frac{1}{\eps}$. A polynomial-time approximation scheme (PTAS), however, need
not run in time polynomial in $\frac{1}{\eps}$. It is not known whether
there exists a PTAS for computing an approximate Nash equilibrium and,
arguably, this is the biggest open question in equilibrium computation today.
While the best-known approximation guarantee for \eps-Nash that is achievable
in polynomial time is much better than that
for \eps-WSNE, the two notions are polynomially related:
there is a PTAS for $\epsilon$-WSNE if and only if there is a
PTAS for $\epsilon$-Nash~\cite{CDT09,DGP09}.

\subsection{Our Results}
The focus of this paper is on the combinatorial structure of equilibrium.
Our first result shows that well-supported Nash equilibria differ
structurally from approximate Nash equilibria in a significant way.
It is known that there are $\frac12$-Nash equilibria with supports
of cardinality at most two \cite{GP06}, and that this result is tight
\cite{FNS07}.
In contrast, we show in Theorem~\ref{thm:two} that for any $\delta>0$, there exist win-lose games for
which no pair of strategies with support sizes at most two is a
$(1-\delta)$-well-supported Nash equilibrium.\footnote{Random games have been shown to have exact equilibria with
support size $2$ with high probability; see B\'{a}r\'{a}ny et al.~\cite{BVV07}.}

With supports of cardinality three, Daskalakis et
al. conjectured, in the first paper that studied algorithms for finding $\eps$-WSNE~\cite{DMP09},
that $\frac{2}{3}$-WSNE are obtainable in every win-lose bimatrix game.
Specifically, this would be a consequence of the following graph-theoretic conjecture.
\begin{conjecture}[\cite{DMP09}]
\label{con}
Every digraph either has a cycle of length at most 3 or an undominated set\footnote{A set $S$ is undominated if there is no vertex $v$
that has an arc to every vertex in $S$.} of 3 vertices.

\end{conjecture}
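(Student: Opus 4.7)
My plan is to proceed by contradiction, supposing that $D$ is a digraph on $n$ vertices with no loops, 2-cycles, or 3-cycles, yet in which every 3-subset $S \subseteq V(D)$ is dominated, meaning some vertex $v \notin S$ has arcs to all three members of $S$. The strategy is to combine counting with the structural constraints imposed by the absence of short cycles, aiming to expose either a 3-cycle or an undominated triple and thereby contradict the hypothesis.

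The first step is a global out-degree bound. Writing $d^+_v$ for the out-degree of $v$, the triple-domination hypothesis yields
\[
\sum_{v} \binom{d^+_v}{3} \;\geq\; \binom{n}{3},
\]
so by convexity the average out-degree is $\Omega(n^{2/3})$. I would then try to boost this to a linear lower bound on the \emph{minimum} out-degree: if some vertex $u$ had small $d^+_u$, then essentially all triples avoiding $N^+(u) \cup \{u\}$ would need dominators drawn from a restricted pool, forcing large out-degrees to be concentrated in a small region; one could then hope to contradict the absence of short cycles inside that region directly.

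The second step exploits 3-cycle-freeness locally. For every arc $u \to v$, the absence of 2- and 3-cycles gives $N^+(v) \cap \bigl(N^-(u) \cup \{u\}\bigr) = \emptyset$, so each out-neighborhood avoids a large forbidden region along every in-arc. Iterating these exclusions across $D$ should place tight restrictions on how out-neighborhoods can overlap, and combining this with the counting bound one would try to locate three vertices no common out-neighbor can simultaneously hit. If the first two steps can be pushed to conclude that the minimum out-degree of $D$ is at least $n/3$, then the $k = 3$ case of the Caccetta-H\"aggkvist conjecture (or its best known partial form) would supply a 3-cycle, contradicting the hypothesis.

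The main obstacle is precisely this reduction: the $k = 3$ case of Caccetta-H\"aggkvist is itself a long-standing open problem, and the naive counting bound above falls short of the required linear out-degree by more than a constant factor. Closing that gap would seem to demand either a refined extremal argument tailored to triple domination (perhaps a flag-algebra computation), or a local gadget-based reduction that sidesteps Caccetta-H\"aggkvist entirely. Given that this statement is presented only as a conjecture in \cite{DMP09} and the present paper works instead with a bipartite variant, I expect this step to be genuinely difficult and would not rule out the possibility that a counterexample exists rather than a proof.
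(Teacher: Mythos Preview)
The statement you are addressing is a \emph{conjecture}, not a theorem, and the paper does not prove it; indeed the paper explicitly says in its conclusion that Conjecture~\ref{con} ``seems a hard graph problem and it is certainly conceivable that it is false.'' So there is no proof in the paper to compare your attempt against.

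As for the approach you sketch, you have already identified the real obstruction yourself. The counting bound $\sum_v \binom{d^+_v}{3} \ge \binom{n}{3}$ is correct and yields average out-degree $\Omega(n^{2/3})$, but this is far from the linear minimum out-degree you would need. Your proposed boosting argument (concentrating large out-degrees near a low-degree vertex and finding a short cycle there) is vague and there is no reason to expect it to close a polynomial gap; triple-domination is a much weaker hypothesis than minimum out-degree $n/3$. And even if you could reach minimum out-degree $n/3$, you would be invoking the $k=3$ case of Caccetta--H\"aggkvist, which is itself open: the best known bound guarantees a directed triangle only when the minimum out-degree exceeds roughly $0.3465\,n$, not $n/3$. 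So the final step of your plan appeals to an unproven statement that is essentially as hard as the conjecture you are trying to establish. Your closing remark that a counterexample may exist is consistent with the paper's own stance.
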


The main result in this paper, Theorem~\ref{thm:main}, shows that one cannot do better with constant size
supports.
We prove that there exist win-lose games that require supports of cardinality at
least $\Omega(\sqrt[3]{\log n})$ in any $\epsilon$-WSNE with $\epsilon < \frac23$.
We prove this existence result probabilistically. The key tool in showing correctness is
a proof of a bipartite digraph variant of the well-known Caccetta-H\"aggkvist conjecture~\cite{CH78}.

A polylogarithmic cardinality bound, as presented here, is the best we can hope for -- a probabilistic argument \cite{KS10} shows that
there exist $\epsilon$-WSNE with
supports of cardinality $O(\frac{1}{\epsilon^2}\cdot \log n)$, for any
$\epsilon> 0$.\footnote{For \eps-Nash equilibria, Alth\"ofer~\cite{Alt94} and
Lipton and Young \cite{LY94} independently proved similar results for zero-sum
games, and Lipton at al.~\cite{LMM03} later proved similar results for 
general-sum bimatrix and multi-player games.}

\section{A Lower Bound on the Support Size of Well Supported Nash Equilibria}

We begin by formally defining bimatrix win-lose games and well-supported Nash equilibria.
A {\em bimatrix game} is a $2$-player game with $m\times n$ payoff matrices $A$ and $B$; we may assume that $m \le n$.
The game is called {\em win-lose} if each matrix entry is in $\{0, 1\}$.

A pair of mixed strategies $\{{\bf p}, {\bf q}\}$ is a {\em Nash equilibrium} if
every pure row strategy in the support of {\bf p} is a best response to
{\bf q} and every pure column strategy in the support of {\bf q} is a best response to
{\bf p}. A relaxation of this concept is the following.
A pair of mixed strategies $\{{\bf p}, {\bf q}\}$ is an {\em $\epsilon$-well supported Nash equilibrium} if
every pure strategy in the support of {\bf p} (resp.  {\bf q}) is an $\epsilon$-approximate best response to
{\bf q} (resp.  {\bf p}). That is, for any row $r_i$ in the support of ${\bf p}$ we have
$${\bf e_i}^T A{\bf q} \ge \max_{\ell} {\bf e_{\ell}}^T A{\bf q} -\epsilon$$
and, for any column $c_j$ in the support of ${\bf q}$ we have
$${\bf p}^T B{\bf e_j} \ge \max_{\ell} {\bf p}^T B{\bf e_{\ell}} -\epsilon\ .$$

In this section we prove our main result.

\begin{theorem}\label{thm:main}
 For any $\epsilon <\frac23$, there exist win-lose games for which every
$\epsilon$-well-supported Nash equilibrium has supports of cardinality $\Omega(\sqrt[3]{\log n})$.
\end{theorem}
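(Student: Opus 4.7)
The plan is to construct the required games probabilistically. Set $k = c\sqrt[3]{\log n}$ for a small constant $c$ and generate a random win-lose bimatrix game $G$ on $n$ rows and $n$ columns by sampling each pair $(A_{ij},B_{ij})$ from a suitable distribution on $\{0,1\}^2$. It is convenient to encode $G$ as a bipartite digraph $D$ on $R\cup C$, with an arc $j\to i$ whenever $A_{ij}=1$ and an arc $i\to j$ whenever $B_{ij}=1$; the $\eps$-WSNE condition translates cleanly into weighted in-degree conditions on $D$ restricted to the supports.

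I would first establish an \emph{unavoidable best response} lemma: with high probability over the random construction, for every subset $S \subseteq C$ with $|S|\le k$ there exists a row $\ell$ with $A_{\ell j}=1$ for all $j\in S$, and symmetrically on the column side. Provided the edge probabilities are chosen appropriately, a union bound over the at most $n^k$ subsets of each side suffices. Now suppose, for a contradiction, that $G$ admits an $\eps$-WSNE with $\eps<2/3$ and supports $(T,S)$ of size at most $k$. By the lemma, the best response against either player's mixed strategy has payoff arbitrarily close to $1$, so every pure strategy in $T$ (resp.\ $S$) must have expected payoff strictly greater than $1/3$. Translated to $D$, this means that in the induced sub-digraph $D[T\cup S]$, every vertex has (weighted) in-degree from the opposite side exceeding $1/3$.

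This density condition in $D[T\cup S]$ is the precise setting of the bipartite Caccetta--H\"aggkvist variant proved elsewhere in the paper, which I would then invoke to conclude that $D[T\cup S]$ must contain a specific forbidden short substructure (a directed cycle or small coherent configuration). A second union bound, over the at most $n^{2k}$ possible locations of such a substructure in $D$, then shows that with high probability no such substructure appears anywhere in $G$, for any choice of candidate $T,S$ with $|T|,|S|\le k$. Since the ``unavoidable best response'' event and the ``no forbidden substructure'' event both hold with probability tending to $1$, they occur simultaneously with positive probability, certifying the existence of a game $G$ with the claimed property.

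The principal obstacle is the bipartite Caccetta--H\"aggkvist variant itself: we must prove an extremal statement saying that any bipartite digraph in which every vertex has weighted in-degree strictly exceeding one third necessarily contains the forbidden substructure. A secondary technical challenge is the calibration of parameters --- the edge probabilities in the random construction and the support-size bound $k$ must be balanced so that both union bounds survive simultaneously, and this balance is exactly what forces the $\sqrt[3]{\log n}$ growth rate, since the number of edges in the forbidden substructure on $k$ vertices, together with the $n^{O(k)}$ number of candidate locations, couples to a cubic relationship with $\log n$.
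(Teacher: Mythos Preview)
Your high-level plan has the right shape—the covering lemma forces best-response payoff~$1$, the $\eps<\tfrac23$ assumption forces weighted in-degree $>\tfrac13$ on the induced bipartite digraph, and the bipartite Caccetta--H\"aggkvist variant then yields a directed $4$-cycle. But your proposed random construction cannot support the last step. The forbidden substructure produced by the extremal lemma is a $4$-cycle, a fixed-size object on four vertices; it is \emph{not} a configuration on $k$ vertices, so the ``$n^{2k}$ locations'' and ``cubic relationship'' heuristics do not apply to it. To carry out your contradiction you need the game digraph $D$ to be globally $4$-cycle--free, and this is incompatible with the $k$-covering property in any i.i.d.\ model. Concretely, if arcs $R\to C$ and $C\to R$ appear independently with probabilities $p$ and $q$, then $k$-covering of each side forces $np^k,\,nq^k\gtrsim k\log n$, hence $pq\gtrsim n^{-2/k}$; on the other hand the expected number of directed $4$-cycles is $\Theta(n^4p^2q^2)$, which is $o(1)$ only if $pq\ll n^{-2}$. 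These two requirements are inconsistent for every $k\ge 1$, and allowing correlation within a single pair $(A_{ij},B_{ij})$ does not help since a $4$-cycle uses four distinct $(i,j)$ cells.

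The paper resolves this tension by abandoning the i.i.d.\ model altogether. It takes a random tournament $T$ on $N$ nodes and builds an asymmetric bipartite digraph $G(T)$ with $R\cong V(T)$ and $C\cong\binom{V(T)}{k}$: the arc $(v,X)$ exists iff $v\in X$, and the arc $(X,u)$ exists iff $u$ dominates every element of $X$ in $T$. In this construction the absence of digons and $4$-cycles is a \emph{deterministic} consequence of $T$ being a tournament (a $4$-cycle in $G(T)$ would force two nodes of $T$ to dominate each other), so no second union bound is needed. Randomness is used only once, to show that with positive probability every $k$-subset of $C$ is covered by some $u\in R$; the calculation there is what produces the threshold $k\approx\sqrt[3]{\log n}$, via the inequality $n^{1/k}>e^{k^2}$. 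In short, the missing idea in your proposal is a structured construction that guarantees $4$-cycle--freeness by design; a purely i.i.d.\ game cannot simultaneously be dense enough to cover all $k$-sets and sparse enough to avoid $4$-cycles.
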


To prove this result, we first formulate our win-lose games graphically.
This can be done in a straight-forward manner. Simply observe that we may represent a $2$-player win-lose game
by a directed bipartite graph $G=(R\cup C, E)$.
There is a vertex for each row and a vertex for each column. There is an arc $(r_i,c_j)\in E$
if and only if $(B)_{ij}=b_{ij}=1$; similarly there is an arc $(c_j,r_i)\in E$ if and only if $(A)_{ij}=a_{ij}=1$.

Consequently, we are searching for a graph whose corresponding game has no high quality
well-supported Nash equilibrium with small supports. We show the existence of such a graph
probabilistically. \\

\noindent{\tt The Construction.}\\
Let $T=(V,E)$ be a random tournament on $N$ nodes. Now create from $T$ an auxiliary bipartite graph $G(T)=(R\cup C, A)$ corresponding
to a $2$-player win-lose game as follows.
The auxiliary graph has a vertex-bipartition $R\cup C$ where there is a vertex of $R$ for each node of $T$ and there is a vertex of $C$ for each
set of $k$ distinct nodes of $T$. (Observe that, for clarity we will refer to nodes in the tournament $T$ and vertices in the bipartite graph $G$.)
There are two types of arc in $G(T)$: those oriented from $R$ to $C$ and those oriented from $C$ to $R$.
For arcs of the former type, each vertex $X\in C$ will have in-degree exactly $k$.
Specifically, let $X$ correspond to the $k$-tuple $\{v_1,\dots, v_k\}$ where $v_i\in V(T)$, for all $1\le i\le k$. Then there are arcs
$(v_i, X)$ in $G$ for all $1\le i\le k$.
Next consider the latter type of arc in $G$.
For each node $u\in R$ there is an arc $(X, u)$ in $G$ if and only if
$u$ dominates $X=\{v_1,\dots, v_k\}$ in the tournament $T$, that is if $(u,v_i)$ are arcs in $T$ for all $1\le i\le k$.
This completes the construction of the auxiliary graph (game) $G$.

We say that a set of vertices $W=\{w_1,\dots, w_t\}$ is {\em covered} if there exists a vertex $y$ such that $(w_j, y)\in A$, for all $1\le j\le t$.
Furthermore, a bipartite graph is {\em $k$-covered} if every collection of $k$ vertices that lie on the same side of the bipartition is covered.
Now with positive probability the auxiliary graph $G(T)$ is $k$-covered.

\begin{lemma}\label{lem:randomT}
For all  sufficiently large $n$ and $k\le \sqrt[3]{\log n}$, there exists a tournament $T$ whose auxiliary bipartite graph $G(T)$ is $k$-covered.
\end{lemma}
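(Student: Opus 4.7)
The plan is to reduce $k$-coverage of $G(T)$ to a dominance condition on the random tournament $T$, and then apply a first-moment argument. The $R$-side of $G(T)$ is automatically $k$-covered: for any $k$-subset $W=\{w_1,\dots,w_k\}\subseteq R$, the vertex $X_W\in C$ that represents $W$ itself has an arc from each $w_i$ by construction. For the $C$-side, fix $W=\{X_1,\dots,X_k\}\subseteq C$ and set $S=X_1\cup\cdots\cup X_k$, so $|S|\le k^2$; the $C\to R$ arc rule says $W$ is covered iff some $y\in V(T)$ dominates every vertex of $S$ in $T$ (automatically $y\notin S$, as tournaments are loopless). By monotonicity---extending $S$ to any $S'\supseteq S$ of size $k^2$ in $V(T)$, an outside dominator of $S'$ is also an outside dominator of $S$---it suffices to prove the following: in $T$, every $S\subseteq V(T)$ with $|S|=k^2$ is dominated by some $y\notin S$.

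For this I would take $T$ to be a uniformly random tournament on $N$ vertices and apply a first-moment bound. For a fixed $S$ with $|S|=k^2$ and a fixed $y\notin S$, the $k^2$ edges $y\to v$ ($v\in S$) are independent unbiased coin flips, giving $\Pr[y\text{ dominates }S]=2^{-k^2}$, and independence across the $N-k^2$ candidates $y$ yields
$$\Pr[S\text{ is not dominated}]=(1-2^{-k^2})^{N-k^2}\le\exp\bigl(-(N-k^2)/2^{k^2}\bigr).$$
Union-bounding over the at most $\binom{N}{k^2}\le N^{k^2}$ choices of $S$, the total failure probability is at most $\exp\bigl(k^2\log N-(N-k^2)/2^{k^2}\bigr)$.

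The final step is to choose $N$ so that this quantity is $<1$ while $n=|V(G(T))|=N+\binom{N}{k}$ is consistent with $k\le\sqrt[3]{\log n}$. Setting $N:=2k^4\cdot 2^{k^2}$ makes the exponent negative for all large $k$, so a good tournament $T$ exists; meanwhile $\log n\le k\log N+O(1)=(\log 2)\,k^3+O(k\log k)$, which is compatible with $k\le\sqrt[3]{\log n}$ for all sufficiently large $n$ because $\log 2<1$ (interpreting $\log$ as the natural logarithm). The main---and essentially only---obstacle is this last parameter balancing: one must verify that the gap between $\log 2$ and $1$ absorbs the lower-order error terms, which it does asymptotically.
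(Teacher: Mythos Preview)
Your approach is essentially the same first-moment argument as the paper's. The reduction to tournament domination, the probability bound $(1-2^{-k^2})^{N-k^2}$, and the union bound are all as in the paper; your union bound over $\binom{N}{k^2}$ size-$k^2$ subsets of $V(T)$ is a slightly different bookkeeping than the paper's union bound over $\binom{n}{k}$ size-$k$ subsets of $C$, but both collapse to $N^{k^2}$.

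There is, however, a genuine error in your final parameter step. From $N=2k^4\cdot 2^{k^2}$ you correctly derive $\log n \le (\log 2)\,k^3 + O(k\log k)$, but this is an \emph{upper} bound on $\log n$, and it forces $k^3 > \log n$ (since $\log 2<1$), i.e.\ $k>\sqrt[3]{\log n}$---the opposite of what you want. Your sentence ``compatible with $k\le\sqrt[3]{\log n}$ \ldots\ because $\log 2<1$'' has the inequality direction backwards: $\log 2<1$ is exactly what makes your $\log n$ fall \emph{short} of $k^3$. To match the lemma's hypothesis $k\le\sqrt[3]{\log n}$ you need a \emph{lower} bound $\log n\ge k^3$, which requires $\log N\gtrsim k^2$, hence $N$ on the order of $e^{k^2}$ rather than $2^{k^2}$. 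Taking, say, $N=2k^4\cdot e^{k^2}$, the failure exponent becomes roughly $k^4 - 2k^4(e/2)^{k^2}$, still negative for large $k$, while now $\log n \sim k^3$ as required. (The paper gets this automatically by going the other direction: it fixes $n$, sets $k=(\log n)^{1/3}$, and deduces $N\ge (k/e)\,n^{1/k}=(k/e)\,e^{k^2}$ from $n=\binom{N}{k}$.) Your construction as written still yields $k=\Theta((\log n)^{1/3})$, which suffices for the main theorem's $\Omega(\sqrt[3]{\log n})$ claim, but it does not prove the lemma as stated.
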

\begin{proof}
Observe that the payoff matrices that correspond to $G(T)$ have $m=N$ rows and $n={N \choose k}$ columns.
Furthermore, by construction, any set of $k$ vertices in $R$ is covered.
Thus, first we must verify that any set of $k$ vertices in $C$ is also covered.

So consider a collection $\mathcal{X}=\{X_1,\dots, X_k\}$ of $k$ vertices in $C$.
Since each $X_i\in C$ corresponds to a $k$-tuple of nodes of $T$, we see that $\mathcal{X}$ corresponds to
a collection of at most $k^2$ nodes in $T$. Thus, for any node $u\notin \cup_i X_i$,
we have that $u$ has an arc in $T$ to every node in $\cup_i X_i$ with probability at least $2^{-k^2}$.
Thus with probability at most $(1-\frac{1}{2^{k^2}})^{N-k^2}$ the subset $\mathcal{X}$ of $C$ not covered in $G(T)$.
Applying the union bound we have that there exists the desired tournament if
\begin{equation}\label{eq:prob}
{ n \choose k}\cdot \left(1-\frac{1}{2^{k^2}}\right)^{N-k^2} < 1
\end{equation}
Now set $k={\log^{\frac 13} n}$. Therefore $\log n^{\frac{1}{k}} = \log^{\frac23} n = k^2$.

In addition, because $n={N \choose k}$, we have that $N\ge \frac{k}{e}\cdot n^{\frac{1}{k}}$.
Hence, $N-k^2> n^{\frac{1}{k}}$. (Note that, since $N\ge k$ this implies that
$G(T)$ is defined.) Consequently,
\begin{eqnarray*}
 { n \choose k}\cdot \left(1-\frac{1}{2^{k^2}}\right)^{N-k^2}
&\le& n^k\cdot \left(1-\frac{1}{2^{k^2}}\right)^{n^{\frac{1}{k}}} \\
&\le& n^k\cdot e^{-\frac{1}{2^{k^2}}\cdot {n^{\frac{1}{k}}}}\\
&\le& n^k\cdot e^{-\frac{1}{e^{k^2\cdot \log 2}}\cdot {n^{\frac{1}{k}}}}
\end{eqnarray*}
Thus, taking logarithms, we see that Inequality (\ref{eq:prob}) holds if
\begin{equation}\label{eq:prob2}
e^{k^2\cdot \log 2} \cdot k\cdot \log n < n^{\frac{1}{k}}
\end{equation}
But $n^{\frac{1}{k}}=e^{k^2}$, so Inequality (\ref{eq:prob2}) clearly holds
for large $n$.
The result follows.
\end{proof}

A property of the  auxiliary graph $G(T)$ that will be very useful to us is that it contains no cycles with less than six vertices.
\begin{lemma}\label{lem:no4cycle}
The auxiliary graph $G(T)$ contains no digons and no $4$-cycles.
\end{lemma}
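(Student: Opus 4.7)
The plan is to argue directly from the construction that any short cycle forces a contradictory pair of oriented edges in the tournament $T$, which is impossible since $T$ is a tournament (no loops and at most one arc between any pair of nodes).

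First I would rule out digons. Suppose $v \in R$ and $X \in C$ admit both arcs $(v,X)$ and $(X,v)$ in $G(T)$. By the construction, $(v,X) \in A$ means $v \in X$, while $(X,v) \in A$ means $v$ dominates $X$ in $T$, i.e.\ $(v,v_i)$ is an arc of $T$ for every $v_i \in X$. Specialising to $v_i = v$ would require a loop $(v,v)$ in $T$, which does not exist. Hence no digons.

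Next I would handle $4$-cycles. Since $G(T)$ is bipartite with parts $R$ and $C$, a $4$-cycle uses two vertices of each side, say $v_1,v_2 \in R$ and $X_1,X_2 \in C$. Up to relabelling there are two orientations to consider. In the pattern $v_1 \to X_1 \to v_2 \to X_2 \to v_1$, the construction gives $v_1 \in X_1$, $v_2$ dominates $X_1$ (so $(v_2,v_1) \in E(T)$), $v_2 \in X_2$, and $v_1$ dominates $X_2$ (so $(v_1,v_2) \in E(T)$); the two tournament arcs between $v_1$ and $v_2$ contradict $T$ being a tournament. In the pattern $X_1 \to v_1 \to X_2 \to v_2 \to X_1$, one similarly reads off $v_1 \in X_2$, $v_2 \in X_1$, $v_1$ dominates $X_1$ and $v_2$ dominates $X_2$; then $v_1$ dominating $X_1 \ni v_2$ yields $(v_1,v_2) \in E(T)$ while $v_2$ dominating $X_2 \ni v_1$ yields $(v_2,v_1) \in E(T)$, again a contradiction.

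Since both orientations of a $4$-cycle are impossible and digons are impossible, the conclusion follows. The argument is genuinely a short case check and I expect no real obstacle; the only care needed is to make sure one enumerates both cyclic orientations of the four vertices, since the in/out roles of the $C$-side vertices are asymmetric in the construction.
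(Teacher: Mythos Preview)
Your argument is correct and matches the paper's proof almost exactly: both derive from the construction that a digon forces $v\in X$ together with $v$ dominating $X$ (hence a loop at $v$), and that a $4$-cycle forces mutual domination between the two $R$-vertices in $T$. The only minor redundancy is that your two $4$-cycle orientations are the same cycle up to relabelling (start the cycle at an $R$-vertex and rename), so a single case suffices, as the paper does.
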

\begin{proof}
Suppose $G(T)$ contains a digon $\{w, X\}$. The arc $(w, X)$ implies that $X=\{x_1,\dots, x_{k-1}, w\}$.
On the other-hand, the arc $(X,w)$ implies that $w$ dominates $X$ in $T$ and, thus, $w\notin X$.

Suppose $G(T)$ contains a $4$-cycle $\{w, X, z, Y\}$ where $w$ and $z$ are in $R$ and where
$X=\{x_1,\dots, x_{k-1}, w\}$ and $Y=\{y_1,\dots, y_{k-1}, z\}$ are in $C$.
Then $z$ must dominate $X$ in $T$ and  $w$ must dominate $Y$ in $T$.
But then we have a digon in $T$ as $(w,z)$ and $(z,w)$ must be arcs in $T$. This contradicts
the fact that $T$ is a tournament.
\end{proof}

Lemmas \ref{lem:randomT} and \ref{lem:no4cycle} are already sufficient to prove a major distinction between
approximate-Nash equilibria and well-supported Nash equilibria. Recall that there always exist
$\frac12$-Nash equilibria with supports of cardinality at most two \cite{DMP09}.
In sharp contrast, for supports of cardinality at most two, no constant approximation
guarantee can be obtained for \eps-well-supported Nash equilibria.
\begin{theorem}\label{thm:two}
For any $\delta>0$, there exist win-lose games for which no pair of strategy vectors
with support sizes at most two is a $(1-\delta)$-well-supported Nash equilibrium.
\end{theorem}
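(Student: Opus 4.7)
My plan is to invoke Lemma~\ref{lem:randomT} with $k=2$, obtaining a win-lose game whose auxiliary graph $G(T)$ is $2$-covered and, by Lemma~\ref{lem:no4cycle}, contains neither digons nor directed $4$-cycles. The key consequence of $2$-coverage is that each player's best-response payoff against any mixed strategy whose support has cardinality at most $2$ is exactly $1$: for any $\{r_i,r_{i'}\}\subseteq R$, the vertex $c_j$ witnessing the coverage of $\{r_i,r_{i'}\}$ satisfies $b_{ij}=b_{i'j}=1$ and so yields the column player payoff $1$ against any mixed strategy supported on $\{r_i,r_{i'}\}$; the symmetric statement holds for pairs in $C$.

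Suppose, for a contradiction, that $({\bf p},{\bf q})$ is a $(1-\delta)$-WSNE with $|\mathrm{supp}({\bf p})|,|\mathrm{supp}({\bf q})|\leq 2$. Since best-response payoffs equal $1$, the well-supported condition requires every pure strategy in either support to have expected payoff at least $\delta>0$. As entries of $A$ and $B$ lie in $\{0,1\}$ and each support has at most two members, this forces: for every $r_i\in\mathrm{supp}({\bf p})$ there exists some $c_j\in\mathrm{supp}({\bf q})$ with $a_{ij}=1$, i.e.\ $(c_j,r_i)\in A$; symmetrically, for every $c_j\in\mathrm{supp}({\bf q})$ there exists some $r_i\in\mathrm{supp}({\bf p})$ with $(r_i,c_j)\in A$.

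If either support has size $1$, the two arcs just produced form a digon between the unique row-column pair, contradicting Lemma~\ref{lem:no4cycle}. Otherwise write $\mathrm{supp}({\bf p})=\{r,r'\}$ and $\mathrm{supp}({\bf q})=\{c,c'\}$, and consider the subdigraph of $G(T)$ induced on these four vertices. The four existence constraints force every vertex to have in-degree at least $1$; the no-digon property allows at most one arc between each of the four row-column pairs, so the subdigraph has at most $4$ arcs. Hence it has exactly $4$ arcs with every vertex of in-degree exactly $1$, and by bipartiteness the arc set decomposes into directed cycles of even length partitioning the four vertices. The only options are a single directed $4$-cycle or two digons; both contradict Lemma~\ref{lem:no4cycle}.

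The main obstacle is the bookkeeping that turns the support conditions into arc-existence statements (in particular remembering the convention $(r_i,c_j)\in A \Leftrightarrow b_{ij}=1$ versus $(c_j,r_i)\in A \Leftrightarrow a_{ij}=1$); once this is in place the contradiction is an elementary in-degree count on a $4$-vertex subgraph.
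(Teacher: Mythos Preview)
Your proof is correct and follows essentially the same approach as the paper: invoke Lemma~\ref{lem:randomT} with $k=2$, use $2$-coverage to force best-response payoff $1$, deduce that every vertex of the induced subgraph on the supports has in-degree at least one, and then obtain a short directed cycle contradicting Lemma~\ref{lem:no4cycle}. The paper's write-up is slightly more economical---it simply observes that minimum in-degree $\ge 1$ on at most four vertices forces a directed cycle of length $2$ or $4$---whereas your case analysis reaches the same conclusion (your ``cycle decomposition'' sentence tacitly uses that out-degrees are also exactly $1$, which does follow here but is worth stating).
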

\begin{proof}
Take the auxiliary win-lose game $G(T)$ from Lemma \ref{lem:randomT} for the case $k=2$.
Now consider any pair of strategy vectors ${\bf p_1}$ and ${\bf p_2}$ with supports of cardinality $2$ or less.
Since $G(T)$ is $2$-covered, the best responses to ${\bf p_1}$ and ${\bf p_2}$ both generate payoffs of
exactly $1$. Thus $\{{\bf p_1}, {\bf p_2}\}$ can be a $(1-\delta)$-well-supported Nash equilibrium
only if each strategy in the support of ${\bf p_1}$ is a best response to at least one of the pure strategies
in the support of  ${\bf p_2}$ and vice versa. Therefore, in the subgraph $H$ of $G(T)$ induced by the supports
of ${\bf p_1}$ and ${\bf p_2}$, each vertex has in-degree at least one. Thus, $H$ contains a directed cycle.
But $G(T)$ has no digons or $4$-cycles, by Lemma \ref{lem:no4cycle}.
Hence, we obtain a contradiction as $H$ contains at most four vertices.
\end{proof}

In light of Lemma \ref{lem:no4cycle}, we will be interested in the minimum in-degree required to ensure that a
bipartite graph contains a $4$-cycle. The following theorem may be of interest on its own right, as it resolves a variant of
the well-known Caccetta-H\"aggkvist conjecture~\cite{CH78} for bipartite digraphs.
For Eulerian graphs, a related but different result is due to Shen and Yuster~\cite{SY02}.

\begin{theorem}\label{thm:extremal}
Let $H=(L\cup R, A)$ be a directed $k\times k$ bipartite graph. If $H$ has minimum in-degree $\lambda \cdot k$ then
it contains a $4$-cycle, whenever $\lambda > \frac13$.
\end{theorem}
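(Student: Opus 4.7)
The plan is to argue by contradiction: suppose $H$ has no directed $4$-cycle and minimum in-degree $\lambda k$ with $\lambda > 1/3$, and derive a contradiction.

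First, I would make a structural observation. For any distinct $\ell_1, \ell_2 \in L$, a directed $4$-cycle through them exists whenever we can pick $r_1 \in N^+(\ell_1) \cap N^-(\ell_2)$ and $r_2 \in N^+(\ell_2) \cap N^-(\ell_1)$ with $r_1 \neq r_2$. So the no-$4$-cycle hypothesis forces, for almost every pair $\ell_1 \neq \ell_2$, at least one of the intersections $N^+(\ell_1) \cap N^-(\ell_2)$ or $N^+(\ell_2) \cap N^-(\ell_1)$ to be empty; the borderline case when both reduce to the same singleton corresponds to two digons at a common $r$ and is handled separately. This naturally defines an oriented graph $D$ on $L$ by $\ell_1 \to_D \ell_2$ iff there exists an $r \in R$ with $\ell_1 \to r \to \ell_2$, and $D$ has no $2$-cycles under the assumption.

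Second, I would establish a density bound via double counting. The number of length-$2$ paths $\ell_1 \to r \to \ell_2$ through $R$ equals $\sum_{r \in R} d^-(r)\, d^+(r)$. From $d^-(r) \geq \lambda k$ for every $r$ together with $\sum_r d^+(r) = \sum_\ell d^-(\ell) \geq \lambda k^2$, this sum is at least $\lambda^2 k^3$, so the arcs of $D$ (weighted by their number of witnessing length-$2$ paths) carry total weight at least $\lambda^2 k^3$.

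Third, the orientation constraint translates to a degree bound. For every $D$-arc $\ell_1 \to_D \ell_2$, the emptiness of $N^+(\ell_2) \cap N^-(\ell_1)$ gives $N^+(\ell_2) \subseteq R \setminus N^-(\ell_1)$, and hence $d^+(\ell_2) \leq (1-\lambda)k$. Applying this simultaneously for several $D$-predecessors $\ell_1, \ell_1', \ldots$ of $\ell_2$ forces $N^+(\ell_2)$ to avoid the union of all their in-neighborhoods in $R$, and a symmetric bound holds on the $R$ side.

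The main obstacle, and the step that furnishes the $1/3$ (rather than a weaker $1/2$) threshold, is the final global combination of these constraints. A naive per-vertex accounting only yields $\lambda \leq 1/2$. To squeeze out the extra factor, I would consider three pairwise $D$-related vertices in $L$ and analyse how their in- and out-neighborhoods in $R$ must approximately tile $R$, mimicking the $3$-periodic ``rotational'' bipartite digraph that attains equality in the theorem. Integrating this tripling argument with the path-count lower bound $\lambda^2 k^3$ from the second step should yield $\lambda \leq 1/3$, contradicting the assumption and completing the proof.
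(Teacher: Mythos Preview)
Your framework --- the auxiliary oriented graph $D$ on $L$ and the lower bound $\sum_{r} d^-(r)\,d^+(r)\ge \lambda^2 k^3$ on length-$2$ paths --- is a sound start, but the proposal has a genuine gap exactly at the step you yourself flag as ``the main obstacle.'' You never carry out the promised ``tripling argument''; you only assert that looking at three pairwise $D$-related vertices whose neighbourhoods ``approximately tile $R$'' will deliver $\lambda\le\tfrac13$. Consider, for instance, $\ell_1\to_D\ell_2\to_D\ell_3$ together with $\ell_1\to_D\ell_3$: the no-$4$-cycle hypothesis forces $N^+(\ell_3)$ to avoid both $N^-(\ell_1)$ and $N^-(\ell_2)$, yet with no control on $|N^-(\ell_1)\cap N^-(\ell_2)|$ this still only yields $d^+(\ell_3)\le (1-\lambda)k$, the same bound a single predecessor gives. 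Converting such local disjointness constraints into a tight global inequality is precisely the hard part of Caccetta--H\"aggkvist-type problems, and your outline supplies no mechanism for it; invoking the extremal blow-up of a $6$-cycle is a heuristic, not an argument. As written, the sketch does not get past $\lambda\le\tfrac12$.

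The paper avoids this difficulty by a quite different, \emph{local} computation. It fixes a single vertex $v$ of \emph{maximum} out-degree $\alpha_1 k$, sets $A_1=N^+(v)$, $B_1=N^-(v)$, $B_2=N^-(B_1)$ and $C_1=R\setminus(A_1\cup B_1)$, and observes that any arc from $A_1$ to $B_2$ closes a $4$-cycle through $v$. A short count then shows at least $\lambda^2 k^2$ arcs enter $B_2$ from $C_1$, whereas at most $|C_1|\cdot\alpha_1 k=(1-\alpha_1-\lambda)\alpha_1 k^2$ arcs leave $C_1$ in total (here the choice of $v$ as the max-out-degree vertex is used). This gives the quadratic $\alpha_1^2-(1-\lambda)\alpha_1+\lambda^2\le 0$, whose discriminant $(1-3\lambda)(1+\lambda)$ is negative once $\lambda>\tfrac13$. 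The extremal anchor $v$ is what makes the inequality close; your global double-count has no analogue of it, which is why the argument stalls.
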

\begin{proof}
To begin, by removing arcs we may assume that every vertex has in-degree exactly $\lambda \cdot k$.
Now take a vertex $v$ with the maximum out-degree in $H$, where without loss of generality $v\in L$.
Let $A_1$ be the set of out-neighbours of $v$, and set $\alpha_1\cdot k=|A_1|$.
Similarly, let $B_t$ be the set of vertices with paths to $v$ that contain exactly $t$ arcs, for $t\in \{1, 2\}$, and set
set $\beta_t\cdot k=|B_t|$.
Finally, let $C_1$ be the vertices in $R$ that are not adjacent to $v$, namely $C_1=L\setminus (A_1\cup B_1)$.
Set $\gamma_1\cdot k =|C_1|$.

These definitions are illustrated in Figure \ref{fig:picture}.

\ \\
\vspace{-1cm}
\begin{figure}[h]
\unitlength1cm
\begin{center}
\begin{picture}(10,8)
\epsfig{file=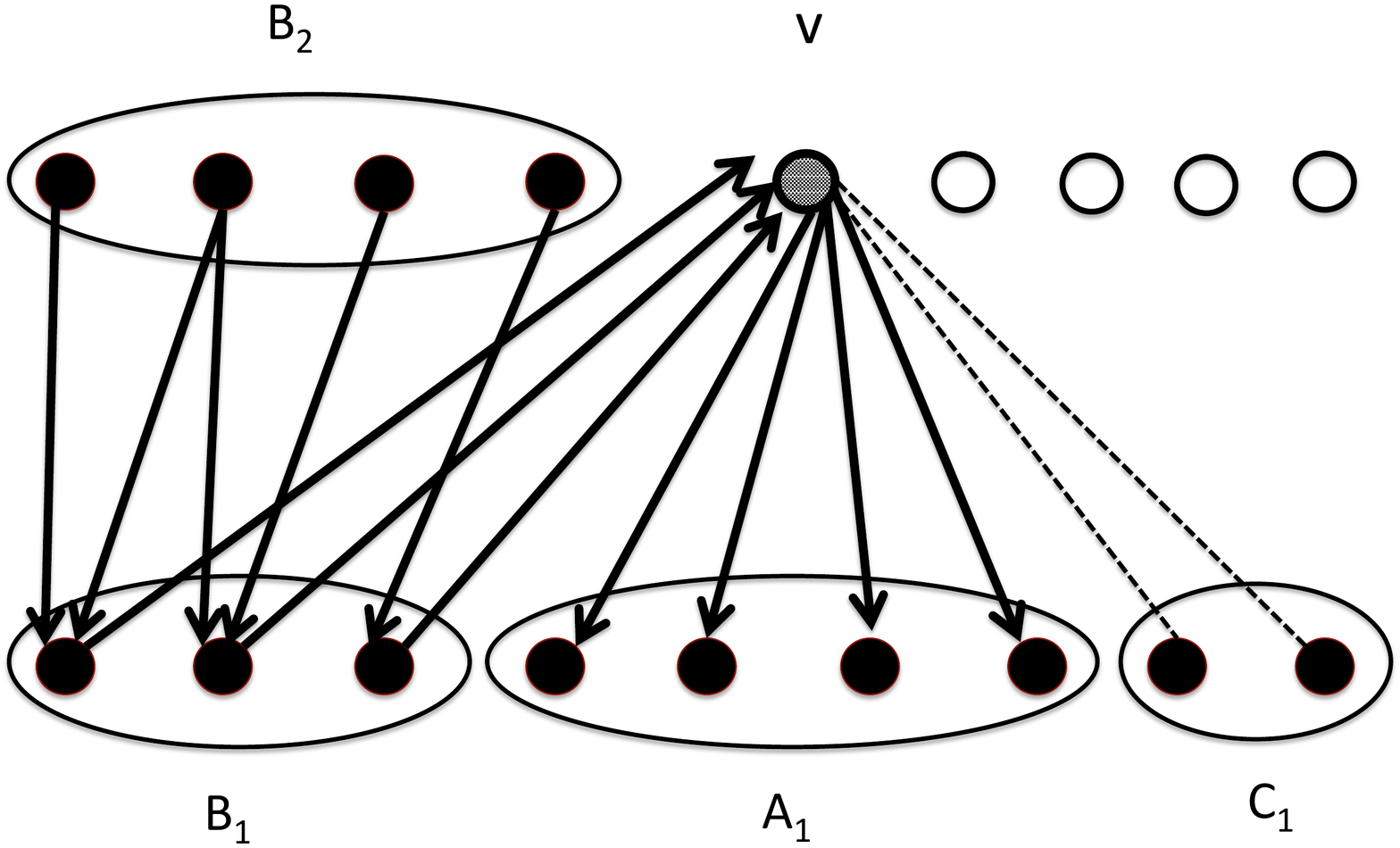, height=8cm, width=10cm}
\end{picture}\par
\vspace{-1cm}
\caption{\label{fig:picture}}
\end{center}
\end{figure}

Observe that we have the following constraints on $\alpha_1, \beta_1$ and $\gamma_1$.
By assumption, $\beta_1= \lambda $. Thus, we have $\gamma_1= 1-\alpha_1-\lambda$.
Moreover, by the choice of $v$, we have $\alpha_1\ge \lambda $, since the maximum out-degree must
be at least the average in-degree.

Note that if there is an arc from $A_1$ to $B_2$ then $H$ contains a $4$-cycle.
So, let's examine the in-neighbours of $B_2$.
We know $B_2$ has exactly $\lambda \cdot k\cdot |B_2|$ incoming arcs.
We may assume all these arcs emanate from $B_1\cup C_1$.
On the other-hand, there are exactly $\lambda \cdot k\cdot |B_1|$ arcs
from $B_2$ to $B_1$. Thus, there are at most $|B_1|\cdot \left( |B_2|-\lambda \cdot k\right)$
arcs from $B_1$ to $B_2$.
So the number of arcs from $C_1$ to $B_2$ is at least
\begin{eqnarray*}
\lambda \cdot k\cdot |B_2| - |B_1|\cdot \left( |B_2|-\lambda \cdot k\right) &=& \lambda \cdot k\cdot \beta_2\cdot k
- \beta_1\cdot k \cdot \left(\beta_2\cdot k +\lambda \cdot k\right)\\
&=& \lambda \cdot k\cdot \beta_2\cdot k
- \lambda\cdot k \cdot \left(\beta_2\cdot k +\lambda \cdot k\right)\\
&=&\lambda^2\cdot k^2
\end{eqnarray*}
Since the maximum degree is $\alpha_1\cdot k$, the number of arcs emanating from
$C_1$ is at most $\gamma_1 \cdot \alpha_1 \cdot k^2$.
Thus $\gamma_1 \cdot \alpha_1 \cdot (1-\alpha_1-\lambda) \ge \lambda^2$.
 Rearranging we obtain the quadratic inequality
\begin{equation*}
\alpha_1^2-\alpha_1(1-\lambda)+\lambda^2 \le 0
\end{equation*}
The discriminant of this quadratic is $1-2\lambda-3\lambda^2$.
But $1-2\lambda-3\lambda^2=(1-3\lambda)(1+\lambda)$ and this
is non-negative if and only if $\lambda\le \frac13$.
This completes the proof.
\end{proof}

We may now prove our main result: no approximation guarantee better than $\frac23$ can be achieved unless the
well-supported equilibria has supports with cardinality $\Omega(\sqrt[3]{\log n})$. \\

\noindent{\bf Proof of Theorem \ref{thm:main}.}
Take a tournament $T$ whose auxiliary bipartite graph is $k$-covered.
By Lemma \ref{lem:randomT}, such a tournament exists.
Consider the win-lose game corresponding to the auxiliary graph $G(T)$,
and take strategy vectors ${\bf p_1}$ and ${\bf p_2}$ with supports of cardinality $k$ or less.
Without loss of generality, we may assume that $\bf{p_1}$ and $\bf{p_2}$ are rational.
Denote these supports as $S_1\subseteq R$ and $S_2\subseteq C$, respectively.
As $G(T)$ is $k$-covered, there is a pure strategy $c^*\in C$ that covers $S_1$
and a pure strategy $r^*\in R$ that covers $S_2$.
Thus, in the win-lose game, $c^*\in C$ has an expected payoff of $1$ against ${\bf p_1}$ and
$r^*\in R$ has an expected payoff of $1$ against ${\bf p_2}$.

Suppose ${\bf p_1}$ and ${\bf p_2}$ form a $\epsilon$-well-supported equilibrium for some $\epsilon<\frac23$.
Then it must be the case that each $r_i\in S_1$ has expected payoff at least $1-\epsilon >\frac13$ against ${\bf p_2}$.
Similarly, each $c_j\in S_2$ has expected payoff at least $1-\epsilon >\frac13$ against ${\bf p_1}$.
But this cannot happen. Consider the subgraph of $G(T)$ induced by $S_1\cup S_2$ where each $r_i\in S_1$ has weight
$w_i=p_1(r_i)$ and each $c_j\in S_2$ has weight $w_j=p_2(c_j)$. We convert this into an unweighted graph $H$ by making
$L\cdot w_v$ copies of each vertex $v$, for some large integer $L$.
Now $H$ is an $L\times L$ bipartite graph with minimum in-degree $(1-\epsilon)\cdot L >\frac13\cdot L$.
Thus, by Theorem \ref{thm:extremal}, $H$ contains a $4$-cycle. This is a contradiction, by Lemma~\ref{lem:no4cycle}.
\qed

We remark that the $\frac23$ in Theorem \ref{thm:main} cannot be improved using this proof technique.
Specifically the minimum in-degree requirement of $\frac13\cdot k$ in Theorem \ref{thm:extremal} is tight.
To see this, take a directed $6$-cycle $C$ and replace
each vertex in $C$ by $\frac13\cdot k$ copies. Thus each arc in $C$ now corresponds to
a complete $\frac{k}{3}\times \frac{k}{3}$ bipartite graph with all arc orientations in the same direction.
The graph $H$ created in this fashion is bipartite with all in-degrees (and all out-degrees) equal to $\frac13\cdot k$.
Clearly the minimum length of a directed cycle in $H$ is six.

\section{Conclusion}
An outstanding open problem is whether any constant approximation guarantee
better than 1 is
achievable with constant cardinality supports. We have shown that supports of
cardinality two cannot achieve this; a positive resolution of Conjecture~\ref{con}
would suffice to show that supports of cardinality three can.
However, Conjecture~\ref{con} seems a hard graph problem and it is certainly conceivable that it is
false.\footnote{For example, the conjecture resembles a question about the existence of
$k$-existentially complete triangle-free graphs for $k>3$ referred to in \cite{EL13},
which the authors consider to be wide open.} If so, that would lead to the intriguing possibility of a very major
structural difference between $\epsilon$-Nash and $\epsilon$-WSNE; namely, that
for any $\delta>0$, there exist win-lose games for
which no pair of strategies with constant cardinality supports  is a
$(1-\delta)$-well-supported Nash equilibrium.

The existence of small support $\epsilon$-WSNE clearly implies the existence of
of polynomial time approximation algorithms to find such equilibria.
Obtaining better approximation guarantees using more complex algorithms
is also an interesting question.
As discussed, the best known polynomial-time approximation algorithm for well-supported
equilibria in win-lose games finds a $\frac{1}{2}$-well supported
equilibrium~\cite{KS10} by solving a linear program (LP).
For games with payoffs in $[0,1]$ that algorithm finds a
$\frac{2}{3}$-well-supported equilibrium.
The algorithm has been modified in \cite{FGS12} to achieve a slightly better
approximation of about $\frac{2}{3}-\zeta$ where
$\zeta = 0.00473$.
That modification solves an almost identical LP as~\cite{KS10} and then
either transfers probability mass within the supports of a solution to the LP
or returns a small support strategy profile that uses at most two pure
strategies for each player. The results of this paper show that both parts of
that approach are needed, and any improvement to the approximation guarantee
must allow for super-constant support sizes.

\ \\
\noindent{\bf Acknowledgements.}
We thank John Fearnley and Troels S\o rensen for useful discussions. The first author is supported by a fellowship from MITACS and an NSERC grant. The second author is supported by an NSERC grant 418520. The third author is supported by an EPSRC grant EP/L011018/1. The fourth author is supported by NSERC grants 288334 and 429598.


\begin{thebibliography}{99}

\bibitem{Alt94} I. Alth\"ofer, ``On sparse approximations to randomized strategies and convex combinations", {\em Linear Algebra and
its Applications}, {\bf 199}, pp339-355, 1994.
\bibitem{BVV07} I. B\'{a}r\'{a}ny, S. Vempala, and A. Vetta, ``Nash equilibria
in random games'', {\em Random Struct. Algorithms}, {\bf 31(4)}, pp391-405, 2007.
\bibitem{BBM10} H. Bosse, J. Byrka, and E. Markakis, ``New algorithms for
approximate Nash equilibria in bimatrix games'', {\em Theoretical Computer
Science}, {\bf 411(1)}, pp164-173, 2010.
\bibitem{CH78} L. Caccetta and R. H\"{a}ggkvist, ``On minimal digraphs with given
girth", {\em Congressus Numerantium}, {\bf 21}, pp181-187, 1978.	
\bibitem{CDT09}
X. Chen, X. Deng, and S. Teng, ``Settling the complexity of computing two-player
Nash equilibria'', {\em Journal of the ACM}, {\bf 56(3)}, pp1-57, 2009.
\bibitem{DGP09} C. Daskalakis, P. Goldberg, and C. Papadimitriou, ``The complexity of computing a Nash equilibrium",
{\em SIAM Journal on Computing}, {\bf 39(1)}, pp195-259, 2009.
\bibitem{DMP07}
C. Daskalakis, A. Mehta, and C. Papadimitriou ``Progress in approximate Nash
equilibria'', {\em ACM Conference on Electronic Commerce (EC)}, pp355-358, 2007.
\bibitem{DMP09} C. Daskalakis, A. Mehta, and C. Papadimitriou, ``A note on approximate Nash equilibria",
{\em Theoretical Computer Science}, {\bf 410(17)}, pp1581-1588, 2009.
\bibitem{EL13} C. Even-Zohar and  N. Linial, ``Triply existentially
complete triangle-free graphs", {\em arxiv.org/1306.5637}, 2013.
\bibitem{FGS12} J. Fearnley, P. Goldberg, R. Savani, and T. S\o rensen, ``Approximate well-supported Nash
equilibria below two-thirds", {\em International Symposium on Algorithmic Game Theory (SAGT)}, pp108-119, 2012.
\bibitem{FNS07} T. Feder, H. Nazerzadeh, and A. Saberi,
``Approximating Nash equilibria using small-support strategies",
{\em ACM Conference on Electronic Commerce (EC)}, 2007.
\bibitem{GP06} P. Goldberg and C. Papadimitriou, ``Reducability among equilibrium problems",
{\em STOC}, 2006.
\bibitem{KS10} S. Kontogiannis and P. Spirakis, ``Well supported approximate equilibria in bimatrix games", {\em Algorithmica},
{\bf 57}, pp653-667, 2010.
\bibitem{KS07} S. Kontogiannis and P. Spirakis, ``Efficient algorithms for constant well supported approximate
equilibria in bimatrix games", {\em ICALP}, pp595-606, 2007.
\bibitem{LY94} R. Lipton and N. Young, ``Simple strategies for zero-sum games with applications to complexity theory'', {\em STOC}, pp734-740, 1994.
\bibitem{LMM03} R. Lipton, E. Markakis, and A. Mehta, ``Playing large games using simple startegies'", {\em ACM Conference on Electronic Commerce (EC)}, pp36-41, 2003.
\bibitem{SY02} J. Shen and R. Yuster, ``A note on the number of edges guaranteeing a {$C_4$} in
{E}ulerian bipartite digraphs", {\em Electronic Journal of Combinatorics}, {\bf 9(1)}, Note 6, 2002.
\bibitem{TS08} H. Tsaknakis and P. Spirakis, ``An optimization approach for approximate Nash
equilibria'', {\em Internet Mathematics}, {\bf 5(4)}, pp365-382, 2008.


\end{thebibliography}
\end{document}